\documentclass[11pt]{article}
\pdfoutput=1

\usepackage[margin=1in]{geometry}
\usepackage[T1]{fontenc}
\usepackage{lmodern} 
\usepackage{amsmath,amsfonts,amssymb,amsthm}
\usepackage{mathtools}
\usepackage[usenames,dvipsnames,svgnames,table]{xcolor}
\usepackage{enumitem}
\usepackage{microtype}

\usepackage[pagebackref]{hyperref}
\hypersetup{
    pdftitle={Randomized query complexity can beat certificate complexity}, 
    pdfauthor={Shalev Ben-David and Robin Kothari}, 
    colorlinks=true, 
    linkcolor=blue, 
    citecolor=blue, 
    urlcolor=blue 
}

\renewcommand{\backref}[1]{}

\renewcommand{\backrefalt}[4]{%
\ifcase #1 %
\or
[p.\ #2]%
\else
[pp.\ #2]%
\fi}

\makeatletter
\renewcommand{\paragraph}{%
  \@startsection{paragraph}{4}%
  {\z@}{2ex \@plus 3.3ex \@minus .2ex}{-1em}%
  {\normalfont\normalsize\bfseries}%
}
\makeatother

\newtheorem{theorem}{Theorem}
\newtheorem{lemma}[theorem]{Lemma}

\newcommand{\B}{\{0,1\}}

\DeclareMathOperator{\R}{R}

\DeclareMathOperator{\Q}{Q}
\DeclareMathOperator{\C}{C}
\DeclareMathOperator{\UC}{UC}

\newcommand{\tO}{\widetilde{O}}

\begin{document}
\title{\vspace{-2ex} Randomized query complexity can beat certificate complexity}

\author{
Shalev Ben-David\\
\small University of Waterloo\\
\and
Robin Kothari \\
\small Google Quantum AI\\
}

\date{\vspace{-6ex}}
\maketitle

\begin{abstract}
A long-standing open question in query complexity asks whether there is a total Boolean function $f$ with $\R(f) \ll \C(f)$, where $\R(f)$ and $\C(f)$ denote its bounded-error randomized query complexity and certificate complexity, respectively.
We construct a function with $\R(f)=\tO(\sqrt{\C(f)})$, 
which is optimal up to log factors. The same function also has $\Q(f)=\tO(\C(f)^{1/4})$, where $\Q(f)$ is the bounded-error quantum query complexity of $f$, which is also nearly optimal.
\end{abstract}

\section{Introduction}
\label{sec:intro}

We assume the reader is familiar with standard concepts in query complexity; see \cite{BdW02} for a gentle introduction to the topic. In particular, for a Boolean function $f$, $\C(f)$, $\R(f)$, and $\Q(f)$ denote the certificate complexity, (bounded-error) randomized query complexity and (bounded-error) quantum query complexity respectively. In this paper we resolve the long-standing open question of whether there exists a total function $f$ with $\R(f) \ll \C(f)$.

\begin{theorem}
    For all $n>0$, there is a total Boolean function $f:\B^N\to\B$ with $N=\tO(n^3)$, such that $\C(f)=\Omega(n^2)$, $\R(f)=\tO(n)$, and $\Q(f)=\tO(\sqrt{n})$.
\end{theorem}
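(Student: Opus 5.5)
The plan is to build $f$ explicitly as a ``self-certifying sampling'' function on roughly $n^2$ blocks of $\tO(n)$ bits each, giving $N=\tO(n^3)$. Let the input be an $n^2$-cell array. Each cell $i$ holds a Boolean value $b_i$, together with an $O(\log n)$-bit field that lets an algorithm locate witnesses of structural violations.

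The intended behaviour is as follows:
\begin{itemize}
\item on ``well-formed'' inputs, $f$ is determined by a global property of all $n^2$ bits $b_i$ that has a large gap, such as ``all $b_i$ agree'';
\item ill-formed inputs are resolved by a short, locally checkable defect, which makes $f$ total;
\item a single defect must change the value, so any certificate on a hard well-formed input has to exhibit essentially every cell.
\end{itemize}
The auxiliary fields should be arranged, for instance as an error-correcting or expander-based encoding of pointers to ``bad'' cells, so that the following holds. Either a constant fraction of cells reveal a defect, or the input is genuinely uniform and $f$ equals the common bit.

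With this design in hand, the three bounds would be proved in this order.

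\textbf{Certificate lower bound.} Take a hard well-formed input $x$, e.g.\ all cells consistent with value $1$. Show that for every set $S$ of fewer than $cn^2$ cells there is an input $y$ agreeing with $x$ on $S$ with $f(y)\neq f(x)$. Here $y$ is obtained by modifying one or a few unseen cells, together with the $O(\log n)$-bit pointer fields that are also unseen. This gives $\C(f)=\Omega(n^2)$.

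\textbf{Randomized upper bound.} Sample $\tO(n)$ random cells and follow their pointers, each pointer costing $\tO(1)$ queries. The structural lemma to prove is a dichotomy. If the input is not uniform, then either a $1/n$ fraction of sampled cells lead, in $O(n)$ further queries, to a defect certificate, or the pointer encoding forces disagreement to be detectable by such sampling. If the input is uniform, any sample returns the correct value. This gives $\R(f)=\tO(n)$.

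\textbf{Quantum upper bound.} Replace each sampling or search stage by Grover search and amplitude amplification. Searching for a defect among $n$ candidates along a pointer path costs $\sqrt n$ instead of $n$. Combining this with the sampling stage should yield $\Q(f)=\tO(\sqrt n)$, with polylog overheads absorbed by standard error reduction.

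The main obstacle is the tension between the certificate lower bound and the randomized upper bound. Flipping one unseen cell must change $f$, otherwise the certificate is small. Yet a randomized algorithm with $\tO(n)$ queries must not be fooled by such a flip, or it errs. The resolution I would try first is to force any single flip to cascade: the pointer and encoding fields of many cells ``vouch'' for each cell, say via a low-degree or expander-based encoding. Then an input differing from a uniform one in a few cells necessarily contains either many locally visible inconsistencies, or a short defect that is reachable from a $1/n$ fraction of random starting cells. At the same time, every valid 1-certificate must still read $\Omega(n^2)$ cells, because a cheating adversary can reroute the vouching so that no fixed small set of cells witnesses uniformity. Making this dichotomy precise, and checking totality, is where I expect most of the work to go.
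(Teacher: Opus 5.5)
There is a genuine gap. The proposal never defines $f$. The two statements that would carry the whole proof are explicitly deferred: the sampling dichotomy behind $\R(f)=\tO(n)$, and the ``rerouting'' argument behind $\C(f)=\Omega(n^2)$.

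Worse, the one concrete mechanism you commit to for the certificate lower bound cannot coexist with the randomized bound. You want, for every $S$ of fewer than $cn^2$ cells, an input $y$ agreeing with the hard input $x$ on $S$ with $f(y)\neq f(x)$. You obtain $y$ by changing one or a few unseen cells, each a bit plus an $O(\log n)$-bit field (despite your opening mention of $\tO(n)$-bit blocks). Now iterate:
\begin{itemize}
\item let $B_1$ be the changed positions and add them to $S$;
\item obtain a new flip $B_2$ disjoint from $B_1$;
\item continue while $|S|<cn^2$.
\end{itemize}
This produces $\Omega(n^2)$ pairwise disjoint sensitive blocks at $x$, so $\R(f)=\Omega(\mathrm{bs}(f))=\Omega(n^2)$. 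Your design rule that ``a single defect must change the value'' fails for the same reason. In general, an adversary that always finds a flip of at most $k$ bits outside any $S$ with $|S|<cn^2$ certifies $\mathrm{bs}(f)\ge cn^2/k$. So any construction meeting the theorem must use flips of $\tilde\Omega(n)$ bits, together with a counting argument that such a large flip can be placed outside every small $S$. You notice this tension yourself, but the ``cascade''/expander encoding is exactly the object that would have to be built and analysed, and the outline does neither.

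The quantum step also does not follow from ``replace each sampling stage by Grover.'' Merely reading $\tO(n)$ sampled cells already costs $\tO(n)$ quantum queries.

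The paper supplies exactly these missing pieces.
\begin{itemize}
\item \textbf{Short, findable $1$-witness.} $1$-inputs have an unambiguous $O(n\log n)$-bit witness: the champion's proposal card plus the prefixes of its $2n-1$ races.
\item \textbf{Finding it cheaply.} A champion is first in at least $n$ of its $2n-1$ races. So the winners of $O(1)$ random races per country give $O(n)$ candidates that contain it with high probability. Any two candidates share a race in which $O(\log n)$ queries eliminate one of them.
\item \textbf{Certificate lower bound.} The hard input is a $0$-input (all $\bot$, all-zero cards). Against fewer than $n^2/2$ revealed bits, averaging plus Markov yields an athlete whose card is unread and whose races contain fewer than $n$ revealed bits. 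Making that athlete champion writes its name into all $2n-1$ of its races. This is a flip of $\Omega(n)$ bits, precisely the large-block behaviour the obstruction above demands.
\item \textbf{Quantum bound.} It comes not from generic Grover substitution. The candidate list is kept virtual, and the elimination tournament is recognized as an instance of $\mathtt{SINK}$ (king finding), solvable with $\tO(\sqrt n)$ queries \cite{SYZ04,MPS23}. Grover search is used only for the final verification.
\end{itemize}
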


Both results are near-optimal because for all total functions $f$, we have $\C(f) = O(\R(f)^2)$ and $\C(f)=O(\Q(f))^4$. Our result can be viewed as a query complexity analogue of a language with low $\textsf{BPP}$ complexity but high $\textsf{NP}$-complexity.

\section{Function and separation}
\label{sec:RvsC}

Consider an international sporting event in which there are $2n$ participating countries and each country sends $n$ athletes. An athlete is labeled $(c,b)$, where $c \in \mathcal C:=[2n]$ denotes their country and $b \in \mathcal B := [n]$ denotes their bib number. Thus there are $|\mathcal C||\mathcal B|=2n^2$ athletes.

Between each pair of distinct countries $c,d\in \mathcal C$ with $c\neq d$, there is a race in which all athletes of the two countries participate. In each of these $\binom{|\mathcal C|}{2}$ races, the valid participants are the athletes of the two countries that label this race and additionally, any non-athletes who wish to participate (e.g., spectators who want to join the race for fun). Let the set of races be denoted by $\mathcal R$.

For every race $\{c,d\}\in \mathcal R$, we have a list $X^{\{c,d\}}$ that stores the race results up to position $n$, with non-athletes represented by the symbol $\bot$. For example, the list $X^{\{c,d\}}$ might look like this:
\begin{equation*}
    (c,7), \ \bot ,\  (d,12), \  (d,2),\ \ldots, \ (c,5).
\end{equation*}
In this race, the athlete with bib number $7$ from country $c$ finished first, a non-athlete finished second, and so on. 

Once the races are over, this sporting event may have a champion or not. An athlete $A$ is the champion if three conditions are met: \\
(C1) $A$ defeated every athlete in all $2n-1$ races in which $A$ participated, \\ 
(C2) $A$ was defeated by strictly less than $n$ non-athletes in total among all $2n-1$ races, and \\
(C3) $A$ submits a valid ``championship proposal form'' that proves conditions (C1) and (C2).

Every athlete $(c,b)\in \mathcal C \times \mathcal B$ can submit a championship proposal form $Y^{(c,b)}$, which informally contains enough information for the judges to quickly verify that conditions (C1) and (C2) are satisfied. Thus this form should list the number of non-athletes that beat athlete $(c,b)$ in each of the $2n-1$ races in which they participated, and these numbers should sum up to $<n$. Instead of writing this list of numbers, we require the athlete to write down the running total (i.e., cumulative sum) of these numbers instead, so that we can quickly check that the last number in this list is $<n$. For example, if in the initial 5 races the athlete's positions were first, second, first, first, and third, simply listing down the number of non-athletes that beat them would produce the list  $(0,1,0,0,2,\ldots)$. Writing this in the cumulative-sum format we get $(0,1,1,1,3,\ldots).$ The second representation is only more powerful than the first since the difference of two adjacent entries reveals the corresponding entry in the first representation.

\paragraph{Formal definition.}
We can now define the function $f$ more formally. The input to $f$ is $(X,Y)$, where both $X$ and $Y$ are bit strings of length $O(n^3 \log n)$. $X$ contains $|\mathcal R|=O(n^2)$ lists $X^{\{c,d\}}$, representing the results of each race, each of which is $O(n \log n)$ bits long. $Y$ contains $|\mathcal C||\mathcal B|$ lists $Y^{(c,b)}$, representing the championship proposal card of each athlete, each of which is $O(n \log n)$ bits. 

Set $f(X,Y)=1$ if and only if there exists an athlete $(c,b)\in \mathcal C \times \mathcal B$ who satisfies these conditions:
\begin{enumerate}[nosep]
    \item Well formatted proposal card: $Y^{(c,b)}$ should be a monotonically non-decreasing list of $2n-1$ numbers such that the last number is $<n$. Thus the proposal card claims a finishing position for athlete $(c,b)$ for each race in which the athlete participates.
    \item Proposal card consistent with race: For each race in which $(c,b)$ participates, the race results variable $X^{\{c,d\}}$ must be consistent with the claimed finishing position in the proposal card and all prior positions of that race must be $\bot$.
\end{enumerate}

Note that for the function to evaluate to $1$ we do not need the entire input to be correctly formatted. For example if athlete $(c,b)$ satisfies the winning condition, it does not matter if athlete $(d,b')$ filled out their proposal card incorrectly. We also do not require that the races in which $(c,b)$ participated have a well-formatted race results list, as long as it satisfies condition 2 above.

As usual, to encode numbers that represent integers up to $n$, we use $\log n$ bits. So, for example, reading the label of the winner of the first race costs $O(\log n)$ queries.

We now establish some properties of this function.

\begin{lemma}\label{lem:C1}
There is at most one champion. Moreover,
$\C_1(f) \leq \UC_1(f) = O(n \log n)$.
\end{lemma}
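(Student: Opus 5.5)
Two things have to be shown: that at most one athlete can satisfy the two conditions in the formal definition of $f$, and that every $1$-input has a certificate of size $O(n\log n)$ which is unique. Uniqueness of the champion is what will make the certificate unambiguous, so I start there.

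\textbf{At most one champion.} Suppose two distinct athletes $(c,b)$ and $(d,b')$ both satisfy conditions 1 and 2. The goal is to find a race containing both, or a chain of races, in which the result list would have to place each of them strictly ahead of the other.

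If $c\neq d$, both run in race $\{c,d\}$. Condition 2 for $(c,b)$ says every position before $(c,b)$'s claimed position in $X^{\{c,d\}}$ is $\bot$. The same holds for $(d,b')$. Each of them occupies a single entry of the list, so whichever finishes earlier has the other at a non-$\bot$ position strictly ahead of it. This is a contradiction.

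If $c=d$ and $b\neq b'$, pick any third country $e\neq c$, which exists since $2n\ge 2$. Race $\{c,e\}$ contains both athletes, and the same argument applies to $X^{\{c,e\}}$.

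This step needs care with the exact formalization of ``consistent with the claimed finishing position.'' The intended reading is that entry number (claimed position) of $X^{\{c,d\}}$ equals $(c,b)$ and all earlier entries are $\bot$. Under that reading, two different athletes cannot both have only $\bot$ before them in the same list.

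\textbf{Certificate.} For a $1$-input, let $(c,b)$ be the champion and take the following $O(n\log n)$ bits:
\begin{itemize}
\item all of $Y^{(c,b)}$, which is $O(n\log n)$ bits;
\item for each of the $2n-1$ races $\{c,d\}$, the entries of $X^{\{c,d\}}$ up to and including the position of $(c,b)$.
\end{itemize}
In race $j$, $(c,b)$ is beaten by $k_j$ non-athletes, so the certificate reveals $k_j+1$ entries of $O(\log n)$ bits each. Condition 1 gives $\sum_j k_j < n$. The total is therefore
\[
O(n\log n) + \sum_j (k_j+1)\,O(\log n) = O\bigl((n + 2n)\log n\bigr) = O(n\log n).
\]
The $k_j$ themselves are differences of consecutive entries of the cumulative list in $Y^{(c,b)}$, so they are determined by the revealed bits.

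These bits force $f=1$, because they witness conditions 1 and 2 for $(c,b)$ on every consistent input. This gives $\C_1(f)=O(n\log n)$.

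\textbf{Unambiguity.} We need a family of $1$-certificates, one chosen per $1$-input, such that any two are inconsistent with each other. Suppose two chosen certificates are consistent, say those of athletes $A$ and $A'$. Consider an input consistent with both. In it, both $A$ and $A'$ satisfy conditions 1 and 2, so by the first part $A=A'$.

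Now both certificates are for the same athlete. Both contain the full $Y^{A}$, so these must agree. Since $Y^A$ fixes the claimed positions, the revealed race prefixes are the same set of variables. The two certificates therefore read identical variables with identical values, and so they are identical. Hence $\UC_1(f)=O(n\log n)$.

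The inequality $\C_1(f)\le\UC_1(f)$ is standard.

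The only real obstacle is the uniqueness argument and making it robust to malformed lists. It works because it only uses the positions and values that the certificates fix.
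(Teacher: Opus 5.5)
Your proof is correct and follows the paper's argument: uniqueness comes from a shared race (race $\{c,d\}$, or $\{c,e\}$ for teammates) in which two athletes cannot both be preceded only by $\bot$, and the certificate is the proposal card plus each race prefix up to the claimed position, of size $O(n\log n)$ because fewer than $n$ revealed entries are $\bot$. Your explicit pairwise-inconsistency argument fills in the unambiguity that the paper asserts in one line. One wording slip: in a shared race it is the \emph{later} finisher that has a non-$\bot$ entry ahead of it, which is the contradiction you intend.
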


We note that with the $\C_0(f)$ lower bound, this
reproves a near-quadratic separation between $\UC_1(f)$ and $\C_0(f)$, which was first established by Balodis et al.~\cite{BBGJK23} and recently improved (by removing all log factors) by Pabbaraju~\cite{Pab26}.

\begin{proof}
Two athletes cannot both be the champion, since any pair of athletes share at least one common race (or $2n-1$ common races if they are from the same country), and in this race at most one of these two athletes can be the first athlete to finish the race.

We can check if a given athlete $(c,b)$ is a champion in $O(n \log n)$ queries, as we now describe. This implies $\C_1(f)=O(n \log n)$, where $\C_1(f)$ is the $1$-certificate complexity of $f$. 

The certificate contains the proposal card, the variable $Y^{(c,b)}$, corresponding to athlete $(c,b)$ and the prefix of each race result up to and including $(c,b)$'s claimed position in each race in which athlete $(c,b)$ participates.
Given this certificate, we check that the numbers in the proposal card are monotonically non-decreasing and the last number is $<n$. We then check each claimed position is correct and every participant who defeated the champion in each of those races is a non-athlete. Since there are $<n$ non-athletes, there are only $<n$ such positions, so the length of this entire certificate is $O(n \log n)$. Since this certificate is unambiguous, the claim for $\UC_1(f)$ follows.
\end{proof}

\begin{lemma}
There is a fast randomized algorithm: $\R(f)=O(n \log n)$.
\end{lemma}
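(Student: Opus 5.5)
The plan is a randomized ``candidate-chasing'' algorithm. It keeps a current candidate athlete, cheaply spot-checks that candidate against its own proposal card, and switches to whoever beat it when a check fails. At the end it runs the $O(n\log n)$ deterministic verification from the proof of Lemma~\ref{lem:C1} on the final candidate. That last step gives one-sided correctness: if $f=0$ the algorithm never accepts, since it only accepts after a full certificate check. So all the work is in showing that, when a champion $A^*=(c^*,b^*)$ exists, the final candidate is $A^*$ with constant probability after $O(n\log n)$ queries in total.

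\emph{Step 1 (initial candidate).} Read race $X^{\{1,2\}}$ entry by entry until the first athlete appears. This costs at most $O(n\log n)$ queries. If $c^*\in\{1,2\}$, the athlete found is $A^*$. In general, a candidate is produced in the same way from whichever race we are currently examining.

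\emph{Step 2 (spot checks).} Given a candidate $A=(c,b)$, repeat the following a bounded number of times. Pick a uniformly random opponent country $d\neq c$. Read the two consecutive entries of $Y^{(c,b)}$ for race $\{c,d\}$; their difference $k$ is the number of non-athletes claimed to beat $A$ there. Then read positions $1,\dots,k+1$ of $X^{\{c,d\}}$. The check fails if the card is malformed, if $A$ is not at position $k+1$, or if some earlier entry is an athlete.
\begin{itemize}[nosep]
\item \emph{Cost.} For a well-formed card the claimed $k$'s sum to less than $n$ over the $2n-1$ races. Hence one check costs $O(\log n)$ queries in expectation (plus $O(\log n)$ for the card entries), and many checks are cheap.
\item \emph{Where to switch.} If an athlete $B$ appears before $A$, switch the candidate to $B$. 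If the card is malformed, or $A$ is not at position $k+1$, read race $\{c,d\}$ until its first athlete appears and switch to that athlete.
\item \emph{Safety of $A^*$.} The champion never fails a check. Its card is valid, and in every one of its races it finishes ahead of all athletes. So once the candidate is $A^*$, it stays $A^*$.
\end{itemize}

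\emph{Step 3 (the main obstacle).} I must show that when the candidate is not $A^*$, a random check fails, and produces a better candidate, with probability high enough that few switches occur. The fact I would exploit is that $A^*$ finishes first outright in at least $(2n-1)-(n-1)=n$ of its races. So every athlete of country $c^*$, and, for each $d\neq c^*$, every athlete of $d$ in race $\{c^*,d\}$, is beaten by $A^*$. My first attempt is a potential argument:
\begin{itemize}[nosep]
\item A switch always moves to an athlete who beat the previous candidate, or who won some race outright.
\item For a candidate in country $c^*$ other than $A^*$, a constant fraction of its races involve a position where $A^*$ sits ahead of it. Its card must then either lie, which a random check catches, or claim a large $k$; but the $k$'s are budgeted to total less than $n$.
\item For a candidate in another country $d$, the single race $\{c^*,d\}$ is hard to hit at random, which is the genuinely delicate case. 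There I would instead sample a random country $d'$ and compare candidates through the race $\{c,d'\}$, hoping to reach country $c^*$ within $O(1)$ expected switches.
\end{itemize}
If the direct potential argument fails, I would fall back on amplifying with $O(\log n)$ independent restarts. That costs a log factor, which the $\tO$ in the main theorem tolerates even though the lemma claims $O(n\log n)$.
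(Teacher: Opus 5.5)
Your Step 3 is where the argument breaks, and it is left as a plan (``first attempt'', ``hoping'', ``if the direct potential argument fails''). The problem is that your algorithm only inspects races of the \emph{current} candidate, and a non-champion can be flawless in all but one of its $2n-1$ races. Take $c^*\notin\{1,2\}$ and an athlete $A$ of country $1$ who is in position $1$ of every race of country $1$ except $\{1,c^*\}$, where $A^*$ is ahead of it, and whose card is all zeros. Step 1 selects $A$. A spot check on $A$ fails only when the random opponent is $c^*$, i.e.\ with probability $1/(2n-1)$. With a bounded number of checks you keep $A$ with probability $1-o(1)$, the final verification rejects, and you output $0$ on a $1$-input. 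Escaping $A$ takes $\Theta(n)$ checks, so you would need to show that the total number of checks over all switches is $O(n)$, and nothing in your sketch controls this. Comparing through $\{c,d'\}$ for a random $d'$ does not help either: it touches a race of $c^*$ with probability $1/(2n-1)$, so you cannot reach country $c^*$ in $O(1)$ expected switches. Restarts only amplify a constant success probability that you have not established, and they cost a $\log n$ factor the lemma does not allow. A smaller point: the $O(\log n)$ expected cost per check assumes a monotone card with total $<n$. A single check cannot confirm this, so you would need to cap how many race positions you read.

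The missing idea is to use the champion's property across \emph{all} countries and then compare candidates pairwise, which is what the paper does. Because $A^*$ is in position $1$ in at least $n$ of its $2n-1$ races, reading the position-$1$ entry of $O(1)$ random races per country costs $O(n\log n)$ queries. With constant probability this puts $A^*$ on a list of $O(n)$ candidates. Any two candidates share a race. Reading their claimed positions there from their cards and checking those two race entries costs $O(\log n)$ and eliminates at least one of them, and $A^*$ is never eliminated. A knockout over the list uses $O(n)$ comparisons, and the survivor's certificate is then verified as in Lemma~\ref{lem:C1}. In your example, $A$ is eliminated by one comparison with $A^*$ in race $\{1,c^*\}$, so you never need to find $A$'s single bad race by random sampling.
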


\begin{proof}
The randomized algorithm has three stages: we first construct a list of athletes of size $O(n)$ that contains the name of the champion if one exists (with high probability). Then we show how to prune this list to at most $1$ potential champion using $O(n \log n)$ queries. Finally, we read and verify the correctness of this champion's certificate, which is of size $O(n \log n)$, as we established.

For the first stage, note that if there is a champion, this athlete must be in the first position in at least half of their races, since they participate in $2n-1$ races and only $<n$ non-athletes are allowed to finish ahead of the champion in total. 
So if we sample the first position of a constant number of races per country, there is a high chance that we'll see the name of the champion. Since there are $2n$ countries, this only costs $O(n \log n)$ queries and gives us a list of $O(n)$ potential champions.

We now run a tournament between these $O(n)$ potential champions. Given any two potential champions, we can easily eliminate one of them using only $O(\log n)$ queries. To do so, pick any race that is common to both athletes. Both athletes claim certain positions in this race, which we can query. If both claims are correct, we eliminate the athlete who was defeated by the other. We eliminate any athlete that makes an incorrect claim. Thus with $O(\log n)$ queries, we have eliminated at least 1 athlete, and after $O(n \log n)$ queries we are left with 1 candidate champion.

Finally, we read and verify the candidate champion's certificate as described in Lemma \ref{lem:C1}.
\end{proof}

\begin{lemma}
There are no short $0$-certificates: $\C_0(f)=\Omega(n^2)$.
\end{lemma}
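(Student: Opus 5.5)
We will exhibit a single $0$-input $(X,Y)$ and show that every $0$-certificate for it reveals $\Omega(n^2)$ bits. Since a certificate is a partial assignment that fixes $f=0$, it is enough to show this: any partial assignment to $(X,Y)$ of size $o(n^2)$ that agrees with our input can be extended to a $1$-input. The plan is to build a hard input and then argue that a short certificate leaves some athlete's winning conditions entirely unread.

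\textbf{Hard input.} For each country $c$, let athlete $(c,1)$ submit a proposal card $Y^{(c,1)}$ that is the all-zeros list of length $2n-1$, which is well formatted. So every $(c,1)$ claims to have won all of its races outright. In each race $\{c,d\}$ with $c<d$, put $(c,1)$ in first place, so that $X^{\{c,d\}}$ starts $(c,1),(d,1),\dots$. The remaining positions and all other cards can be filled arbitrarily, say with $\bot$ and garbage. Under this input, $(c,1)$ is consistent with every race $\{c,d\}$ where $d>c$ and inconsistent with every race where $d<c$. Each $(c,1)$ with $c\ge 2$ therefore fails, $(2n,1)$ fails in all $2n-1$ of its races, and no other athlete has a valid card. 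Hence $f=0$.

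\textbf{Escaping a short certificate.} Take a partial assignment $P$ consistent with this input that reads fewer than $n^2/10$ bits. Call a race \emph{touched} if $P$ reads any bit of its first-position entry. At most $n^2/10$ races are touched. The country graph has $\binom{2n}{2}\approx 2n^2$ edges. By averaging, some country $c^*$ has fewer than $n/5$ touched races, and also has fewer than $n/5$ read bits in its card $Y^{(c^*,1)}$; each of these two conditions fails for at most half of the countries, so such a $c^*$ exists.

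We then modify the input outside $P$ to make $(c^*,1)$ a champion. In every untouched race of $c^*$, set the first entry to $(c^*,1)$. In each touched race, whose first entry may be $(d,1)$ or unknown to us, we instead use non-athletes. Suppose $P$ has read only part of the first entry, or has read entries that fix the first position as an athlete. We cannot then make $(c^*,1)$ first. The plan is to charge such races to the non-athlete budget: rewrite the card $Y^{(c^*,1)}$ so that in a touched race $(c^*,1)$ finishes after a few $\bot$'s. This only works if $P$ has not fixed the early positions to athletes. Fixing the first $k$ positions costs about $k\log n$ bits, which lets us bound the budget.

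\textbf{Main obstacle.} The delicate step is exactly these touched races. A race where $P$ has read that position 1 is an athlete other than $(c^*,1)$ genuinely blocks $(c^*,1)$, because condition (C1) fails. The fix is to choose $c^*$ so that none of its races have a fully read athlete-valued first position. Each such race costs $\Omega(\log n)$ bits and rules out only two countries. Reading a full athlete name therefore costs bits, and to block all $2n$ countries this way needs at least $n$ such races, but this gives only $n\log n$, not $n^2$. So the hard input must be changed. Instead, let every race be a $\bot$-heavy race in which the genuine first athlete appears at position about $n/2$, with the non-athlete counts arranged so that each $(c,1)$ totals around $n$ $\bot$'s and just exceeds the budget. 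Blocking a country then requires reading about $n$ cells of its races or of its card. Summing over the $2n$ countries gives $\Omega(n^2)$. The bulk of the work is choosing these $\bot$-counts so that every athlete barely fails, while each one can be repaired locally by altering unread cells. I would try counts that sum to exactly $n$ per athlete, so that rewriting any single unread $\bot$ into an early finish suffices.
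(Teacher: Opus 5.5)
There is a genuine gap: your proposal never reaches a valid argument. First, your initial hard input is not a $0$-input. Every race of country $1$ is $\{1,d\}$ with $1<d$, so $(1,1)$ finishes first in all $2n-1$ of its races. Its all-zero card is well formatted and consistent with those races, so $(1,1)$ is a champion and $f=1$.

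Second, and more fundamentally, your escape argument only ever tries to revive one designated athlete $(c,1)$ per country. This cannot work for any $0$-input, because every single athlete can be ruled out with $O(\log n)$ revealed bits. The certificate can show a non-monotone adjacent pair or a bad last entry of the athlete's card. Otherwise it can show two adjacent card entries together with the one race cell they point to that contradicts the claim. So all $2n$ designated athletes can be killed with $O(n\log n)$ bits. You sensed a version of this in your ``main obstacle'' paragraph, but the replacement is only a plan (``I would try\ldots''), and it is internally inconsistent. If the first athlete of every race sits at position about $n/2$, then any athlete is preceded by roughly $(n/2)(2n-1)$ non-athletes over its races, not ``around $n$''. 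Likewise, the claim that blocking a country requires about $n$ cells ``of its card'' is false for the reason just given.

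The missing idea is that you do not need athletes who ``barely fail'' at all. Take every race list to be all $\bot$ and every card, say, all zeros. Then $f=0$ trivially, since no athlete appears anywhere. Revealed race cells only ever show $\bot$, which can block an athlete only by using up its budget of fewer than $n$ non-athletes.

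The $\Omega(n^2)$ bound then comes from two averaging bounds over all $2n^2$ athletes, not one athlete per country. A partial assignment with fewer than $n^2/2$ bits touches the cards of at most a quarter of the athletes. Let $Z_c$ be the number of revealed bits in races of country $c$. Since each race bit involves two countries, $\sum_c Z_c\le n^2$, so by Markov at most half of the athletes (weighting by country) have $Z_c\ge n$. Hence some athlete $(c,b)$ has an untouched card and fewer than $n$ touched cells in its races. Place it at the first untouched position of each of its races. All earlier positions are touched, hence $\bot$, and there are fewer than $n$ of them in total. Then write its card to match.

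The essential feature your approach lacks is that each country has $n$ athletes with separate cards. Ruling a country out via cards therefore costs at least $n$ bits. Ruling it out via its races also costs about $n$ race bits, each shared by only two countries.
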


\begin{proof}
Consider the input $(X,Y)$ where the $X$ lists for each race are filled with $\bot$ and the $Y$ lists for each athlete are filled with $0$. This is clearly an input with $f(X,Y)=0$. We will show it is hard to certify: Any certificate that exhibits $n^2/2$ bits of this input cannot certify that $f(X,Y)=0$. 

Toward a contradiction, assume there is such a certificate, and it has revealed $n^2/2$ bits of the input. Since there are $2n^2$ athletes, most athletes' championship proposal cards are untouched by the certificate. 
More specifically, if we pick a random athlete, the probability that their proposal card is touched is at most $1/4$.

For a country $c \in \mathcal C$, let $Z_c$ denote the number of bits revealed by the certificate in the
races that involve $c$. We know $\sum_c Z_c \leq n^2$, since one race involves two countries. If we pick a random athlete and look at their country $c$,
we have $\mathbb E[Z_c] \leq n/2$.
By Markov's inequality, $\Pr [Z_c \geq n] \leq 1/2$.

By union bound, the probability of either bad event occurring is at most $3/4$. Thus there exists an athlete $(c,b)$ for which $Z_c < n$ and their entire proposal card is untouched. We can now make this athlete the champion by changing bits that the certificate has not touched. First, since $<n$ bits in their races have been touched, we put this athlete in the first untouched position in each race they participate in. Since all revealed positions are $\bot$, this athlete has been defeated by $<n$ non-athletes in total. Finally, we edit this athlete's proposal card to be consistent with the race positions, which we can do since the proposal card was fully untouched by the certificate.
\end{proof}

\section{Quantum algorithm}
\label{sec:quantum}

\begin{lemma}
    There is an even faster quantum algorithm: $\Q(f)=\tO(\sqrt{n})$.
\end{lemma}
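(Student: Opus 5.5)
We follow the three-stage structure of the randomized algorithm, replacing each stage with a quantum subroutine that costs $\tO(\sqrt n)$ queries. Throughout we treat reading one number or athlete label as a single $O(\log n)$-bit query, so log factors are absorbed into $\tO$.

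\textbf{Stage 3 (verification).} Given a candidate athlete $(c,b)$, we check the certificate of Lemma~\ref{lem:C1} with Grover search. Checking that the proposal card $Y^{(c,b)}$ is non-decreasing with last entry $<n$ is a search for a ``bad'' adjacent pair among $2n-1$ entries, costing $\tO(\sqrt n)$. The remaining checks cover the $2n-1$ races and, in race $\{c,d\}$, the claimed position $p_d = 1+ (Y^{(c,b)}_d - Y^{(c,b)}_{d-1})$ (with the appropriate indexing). There we must verify that position $p_d$ holds $(c,b)$ and that positions $1,\dots,p_d-1$ all hold $\bot$. The set of cells to check is the disjoint union over races of prefixes of total size $<n + 2n-1 = O(n)$. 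Given an index into this union, we locate the corresponding race and position from the running totals in $Y^{(c,b)}$, which costs $O(1)$ queries per index because we use the cumulative-sum format. This is exactly why the card stores running totals. Grover search over these $O(n)$ cells for a violation therefore costs $\tO(\sqrt n)$. Bounded error is amplified with $O(\log n)$ repetitions.

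\textbf{Stages 1 and 2 (finding the unique candidate).} The main obstacle is to produce the candidate without building and pruning an $O(n)$-size list classically, since that alone costs $\Omega(n)$. My plan is as follows.

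First, sample $O(\log n)$ uniformly random races for each country $c$ and read their first-place finishers. This gives, for each $c$, a list $L_c$ of $O(\log n)$ athletes. If a champion from country $c$ exists, then with high probability it appears in $L_c$, because it wins at least half its races. We cannot afford to do this for all $2n$ countries. Instead, we observe that the champion is the unique athlete satisfying a checkable predicate.

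Define $g(c)=1$ if some athlete in $L_c$ passes the Stage-3 verification. Evaluating $g(c)$ costs $\tO(\sqrt n)$. Running Grover search over the $2n$ countries with $g$ as the oracle costs $\tO(\sqrt n\cdot\sqrt n)=\tO(n)$, which is too slow. We must therefore avoid nesting a $\sqrt n$ verification inside a $\sqrt n$ search.

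To fix this, I would first use the tournament idea to find a candidate cheaply, and only then verify once. Finding the candidate is a minimum-finding problem over the $O(n\log n)$ sampled names, using the comparison ``defeats in a common race or makes a false claim.'' This comparison costs $O(1)$ queries but is not a total order. Instead of sorting, I would use the Dürr--Høyer minimum-finding structure: maintain a current candidate $A$, and Grover-search for any sampled athlete $B$ that eliminates $A$. Such a $B$ either beats $A$ in their common race or exposes a false claim by $A$ there, and eliminations are transitive enough that the true champion is never eliminated and can eliminate every other candidate. To bound the number of rounds by $O(\log n)$ as in Dürr--Høyer, the key step is to pick $B$ uniformly among the eliminators. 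The claim to prove is then that a random eliminator is eliminated by a constant fraction of the remaining candidates.

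If that rank-like argument fails, the fallback is a different test. Search for a sampled athlete $A$ whose claims hold on a random subset of its races, using nested amplitude estimation, so that verifying a fake champion on $O(1)$ random races catches it with constant probability. The full $\tO(\sqrt n)$ verification is then applied only to the final survivor. The delicate step in this fallback is ensuring that fakes which are nearly valid get rejected, which is why the final exact Stage-3 check is needed.

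This yields total cost $\tO(\sqrt n)$.
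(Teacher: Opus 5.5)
Your Stage 3 is fine and matches the paper's verification. One slip: the card stores only running totals, so mapping an index of the union of prefixes to its race needs a binary search over the increasing sequence $Y^{(c,b)}_j+j$. That is $O(\log n)$ number-reads rather than $O(1)$, which is harmless inside $\tO$.

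The genuine gap is in Stages 1--2, where the whole difficulty lies. Your D\"urr--H{\o}yer-style walk (replace the current $A$ by a uniformly random eliminator $B$) gets its $O(\log n)$ round bound only from a total order, where a random improvement halves the set of elements above the threshold. Your comparison has no usable transitivity. The adversary sets the relative order of $(c,b)$ and $(d,b')$ in race $\{c,d\}$ independently for each pair of countries, so the candidates can form essentially any tournament that has the champion as a sink.

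For instance, take a genuine champion, pick one athlete from each of $\Theta(n)$ other countries, and order these athletes in their pairwise races by a regular tournament, with truthful position claims. Every non-champion candidate is then eliminated by about half of the $\Theta(n\log n)$ sampled names, while the champion accounts for only $O(\log n)$ of them. A uniformly random eliminator is therefore the champion with probability $O(1/n)$, and the eliminator set never shrinks. The walk takes $\Omega(n)$ rounds of at least one query each, so the shrinkage claim you leave ``to prove'' is false and this route costs $\Omega(n)$.

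The fallback fails at exactly the step you flag. A candidate whose claims fail in only one of its $2n-1$ races, or at a single $\bot$ cell, passes $O(1)$ random race checks with probability $1-O(1/n)$. You give no way to reject such candidates without nesting an $\Omega(\sqrt n)$ check inside the outer search.

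The missing idea is a quantum algorithm built for tournaments. The paper proceeds as follows:
\begin{enumerate}
\item It keeps the $O(n)$ candidates virtual: an ID names a race position the randomized algorithm would have read, so Stage 1 costs no queries.
\item It orients each pair $\{p,q\}$ by the same local comparison you describe, with ID tie-breaks for mutual lies and repeated names, and with an athlete always beating $\bot$. One arc then costs $O(\log n)$ queries, and a copy of the champion is the sink.
\item It runs the quantum $\mathtt{SINK}$ algorithm of Sun, Yao and Zhang~\cite{SYZ04}, or the king-finding algorithm of Mande, Paraashar and Saurabh~\cite{MPS23} (when a sink exists it is the unique king). This finds the sink of an $m$-vertex tournament in $\tO(\sqrt m)$ queries.
\end{enumerate}
Substituting this for your Stages 1--2 and following it with your Stage 3 gives $\Q(f)=\tO(\sqrt n)$.
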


The quantum algorithm follows the same three stages as the randomized algorithm: (1) construct a list of size $O(n)$ that contains the champion if one exists, (2) run a tournament that outputs 1 candidate champion, and (3) verify that this is the champion.
The algorithm uses the quantum algorithm for $\mathtt{SINK}$
(from \cite{SYZ04})
as a subroutine, but otherwise consists only of Grover searches.

\paragraph{First stage: a list of $O(n)$ candidates.}
For the first stage, as we noted before, if we sample the
winner (first position) of a constant number of races per country, there is a high chance that we'll see the name of the champion. If we were to do this explicitly as in the randomized algorithm, this would cost $O(n \log n)$ queries, which is too
much. Instead, we maintain a virtual list of $O(n)$ candidates; they are the candidates whose names appear in the positions the randomized algorithm would have queried. This uses no queries.

\paragraph{Second stage: narrowing down to one candidate.}
For the second stage, we need to run a tournament
between those $O(n)$ potential champions to identify a single
winner, who will then be the single candidate champion of this event. 
We do this by reducing to the 
$\mathtt{SINK}$ problem. In this problem,
the input is an orientation of the complete graph on
$n$ vertices, turning each of the $\binom{n}{2}$ edges
into a directed arc (using $\binom{n}{2}$ input bits);
the goal is to determine whether there is a sink,
meaning a vertex for which all $n-1$ incident arcs are
going into it rather than coming out.

The $\mathtt{SINK}$ problem on $n$ vertices has input size
$\Theta(n^2)$ and can be solved using $\tO(\sqrt{n})$
quantum queries, as shown by
Sun, Yao, and Zhang \cite{SYZ04}.
We will need a quantum algorithm which actually finds
the sink if it exists, rather than just determining
its existence; \cite{SYZ04}
does not explicitly claim that their algorithm finds the sink if one exists, although their algorithm does work. To be fully
rigorous, we can rely on the more recent result of Mande, Paraashar, and Saurabh~\cite{MPS23}, who show that one can always find a ``king'' in such an oriented complete graph
in $\tO(\sqrt{n})$ quantum queries, where a king is a vertex
reachable from every other vertex using at most $2$ hops.
In a general input, a king is not unique; however,
if a sink exists, then the king is unique and equal to the
sink. Thus the king-finding algorithm of \cite{MPS23}
(which also uses $\tO(\sqrt{n})$ quantum queries)
can be used to efficiently find the sink if one exists.

Let's label these potential champions with ID numbers $1$ to $O(n)$.
We use ID numbers in this description because their 
names are unknown to us until we query them at a cost of $O(\log n)$ per name.
We now describe the reduction to $\mathtt{SINK}$: Define a directed graph 
where the vertices are the $O(n)$ potential champions.
For each pair $\{p,q\}$ of such players, we intuitively
say that $p$ beats $q$ (and hence orient the arc so that
it goes from $q$ to $p$) if we can eliminate $q$ as a possible
champion by looking locally at the comparison between
$p$ and $q$, as we did in the randomized algorithm.

More formally, first assume $p$ and $q$ represent distinct athletes.
Pick the first race involving
both $p$ and $q$ (this is unique if $p$ and $q$ represent 
different countries). 
Use the proposal cards of $p$
and $q$ to find their claimed rank in that race,
then look in the results of that race to check whether
these ranks are correct. Finally, if both claimed ranks
are correct, eliminate the player who performed
worse in that race. For example, if the proposal card of $p$
says $p$ ranked 2nd in the race and the proposal card of $q$
claims $q$ ranked 5th, we check positions $2$ and $5$ in the
race results, and if both are correct (listing $p$
and $q$ respectively), we eliminate $q$
(since $2<5$); on the other hand, if $p$ lied on the proposal
card and the race results don't list $p$ in position $2$,
we eliminate $p$ instead. It can happen that both
$p$ and $q$ are eliminated (since both lied); in that case,
we arbitrarily say that $p$ beat $q$ if the ID
of $p$ is smaller than that of $q$. Now we return to the assumption that
$p$ and $q$ were distinct athletes. If they represent the same
athlete, we again break ties using ID number. If one is an athlete 
and the other is $\bot$, we say the athlete wins. Lastly, if both are
$\bot$, we break ties using ID number.

This defines an orientation of the complete graph on
the $O(n)$ candidate champions; moreover,
we can query one bit of this input to $\mathtt{SINK}$
using $O(\log n)$ queries to the real input.
Thus we can run the quantum $\mathtt{SINK}$ algorithm
to get a single candidate champion using
$\tO(\sqrt{n})$ quantum queries.

\paragraph{Third stage: checking the candidate.}
Finally, we verify the correctness of the champion's certificate, as in the randomized algorithm. There are three
things to check. First, we check that the proposal card
of the champion is correctly formatted: it must be monotone
with last entry $<n$. Monotonicity can be checked by searching for a non-monotone consecutive pair using Grover's algorithm, which costs $O(\sqrt{n}\log n)$ quantum queries.

Next, we check that the proposal card correctly lists
the rank of the champion in each relevant race. This can
once again be done via a Grover search: there are $2n-1$
relevant races, and for each one we can verify the correctness
of the proposal card by taking the difference
of two consecutive numbers on the card, going to the 
corresponding entry of the race results $X$, and checking
that the name of the champion is listed there. This can
be done in $\tO(\sqrt{n})$ quantum queries.

Finally, we need to check that in each race, all the racers
who beat the champion are non-athletes. Since these are $<n$ claims
that need to be checked, this can still be
done via a Grover search, but it is slightly more subtle because we
don't know the explicit list of positions we need to check.
We start by learning the total number of positions that
we need to check that are $\bot$; this is simply the last entry
of the proposal card of the candidate champion, say $n'<n$.
We give these $n'$ positions an order: we first order them
by race, then within a race, by finishing position.
For example, if the champion came third in the first race
and fourth in the second race, our order will be:
position $1$ of race $1$,  then position $2$ of race $1$,
then position $1$ of race $2$, then position $2$ of race $2$,
then position $3$ of race $2$ (and so on).

We then Grover
search over the numbers $1,2,\dots,n'$, and for each one,
check if the corresponding position in our order is
$\bot$. For a given $k\le n'$, we can check
whether the $k$-th position is $\bot$ by first binary
searching the proposal card to find the race in which
the $k$-th defeat of the champion occurs
(recall that the proposal card gives the running totals
of the number of players who beat the champion).
For example, if the champion's proposal card lists
$0,0,1,2,7,7,9$
and $k=3$, we can binary search to find $3$
in the list between $2$ and $7$, and conclude that
the third player to defeat the champion is rank $1$
in race $5$. This binary search requires $O(\log^2 n)$
queries for any given $k$, and we can then check the
race outcome is $\bot$ in that position using an additional
$O(\log n)$ queries. There are less than $n$ options for $k$,
so a Grover search can verify all relevant positions are
$\bot$ using $O(\sqrt{n}\log^2 n)$ quantum queries.

\subsection*{Concurrent work}

Yesterday, a paper appeared on the arXiv by Ambainis, Iraids, and Kokainis~\cite{AIK26} establishing a near-quartic separation between $\Q(f)$ and $\C(f)$. 

\subsection*{Human Acknowledgments}

We thank Mika G\"o\"os, Daochen Wang, and Ronald de Wolf for discussions and comments on this paper. Mika G\"o\"os has informed us that he suspected the function constructed in \cite{Pab26} exhibits a nearly quadratic separation between $\R(f)$ and $\C(f)$, and ChatGPT was able to prove this claim.

\subsection*{AI Acknowledgment}

All text in this document is human written, but the result was significantly assisted by AI. We describe the workflow in some detail to highlight some takeaways that might be helpful to others.

First, we assembled a large corpus of useful facts in query complexity over many months using various versions of ChatGPT, Claude, and Gemini. 
Given this curated context and the hunch that there should be a function with $\R(f) \ll \C(f)$ inspired by the Hex function~\cite{BBGJK23}, ChatGPT 5.6 found a function with a power $4/3$ separation. After several rounds asking for improvements, an extremely complicated power 2 separation was discovered. A few rounds of simplification using ChatGPT, Claude, and Gemini brought this down to a \textasciitilde 10 page proof. 

Despite this \textasciitilde$10$ page proof of the quadratic separation between $\C(f)$ and $\R(f)$, the LLMs were unable to construct a power 4 separation between $\C(f)$ and $\Q(f)$. The best separation achieved was power 10/3.  The authors then constructed a new, simpler function with a \textasciitilde$2$ page proof. This function is different from the one in this paper and relied on the existence of bipartite graphs with certain properties guaranteed by a probabilistic argument.
Given this simplified function and the hint that quantum algorithms can find the winner in a tournament quickly, ChatGPT 5.6 was able to find a power 4 separation between $\C(f)$ and $\Q(f)$ that was \textasciitilde$10$ pages. The authors then collaborated with LLMs again to understand and simplify the resulting function to arrive at the simple function presented here that simultaneously achieves both separations.

Our takeaways: (1) Curated context with a proof strategy hint appears to be helpful. We were unable to recover any separation between $\R(f)$ and $\C(f)$ if we gave the LLMs no hints or curated context.
(2) Simplifying AI-generated proofs is useful not just for communicating results to humans but also for LLMs to generalize the results. Without the human-given simplification of the construction,
the LLMs were stuck at a power $10/3$ separation 
between $\Q(f)$ and $\C(f)$.

\bibliographystyle{alphaurl}
\bibliography{refs}

\appendix

\end{document}